\theoremstyle{plain}
\newtheorem{theorem}{Theorem}%[section]
\theoremstyle{remark}
\newtheorem{remark}[theorem]{Remark}
\newtheorem{definition}[theorem]{Definition}
\renewcommand\phi{\varphi}
\newcommand{\eps}{\epsilon}
\renewcommand{\epsilon}{\varepsilon}
\renewcommand{\ge}{\geqslant}
\renewcommand{\le}{\leqslant}
\renewcommand{\geq}{\geqslant}
\renewcommand{\leq}{\leqslant}
\renewcommand{\tilde}{\widetilde}
\title[A  reaction network approach to  acoustic wave turbulence]{A  reaction network approach to the theory of acoustic wave turbulence}
\author[M.-B. Tran]{Minh-Binh Tran}
\address{Department of Mathematics, Southern Methodist University} 
\email{minhbinht@mail.smu.edu}
\author[G. Craciun]{Gheorghe Craciun}
\address{Department of Mathematics and Department of Biomolecular Chemistry, University of Wisconsin-Madison} 
\email{craciun@wisc.edu}
\author[L. M. Smith]{Leslie M. Smith}
\address{Department of Mathematics and Department of Engineering Physics \\
University of Wisconsin-Madison} 
\email{lsmith@math.wisc.edu}
\author[S. Boldyrev]{Stanislav Boldyrev}
\address{Department of Physics,
University of Wisconsin-Madison} 
\email{boldyrev@wisc.edu}
\begin{document}

\maketitle
\begin{abstract} We propose a new approach to study the long time dynamics of the wave kinetic equation in the statistical description  of  acoustic  turbulence. The approach is based on rewriting the discrete version of the wave kinetic equation in the form of a chemical reaction network, then employing techniques used to study the Global Attractor Conjecture to investigate the long time dynamics of the newly obtained chemical system. We show that the solution of the chemical system converges to an equilibrium exponentially in time. In addition, a resonance broadening modification of the acoustic wave kinetic equation is also studied with the same technique. 
For the near-resonance equation, 
if the resonance broadening frequency is larger than a threshold, the solution of the system goes to infinity as time evolves.
\end{abstract}{\bf Keywords:} kinetic theory, Boltzmann equation, weak turbulence, dynamical systems,  rate of convergence to equilibrium, global attractor conjecture, mass-action kinetics, power law
systems, biochemical networks, acoustic waves.
{\\{\bf MSC:} {35Q20, 45A05, 47G10, 82B40, 82B40, 37N25, 92C42, 37C10, 80A30, 92D25.}
\tableofcontents

\section{Introduction}\label{Intro}

Describing the behavior of a spatially homogeneous field of random, weakly interacting dispersive waves, the theory of wave turbulence has been very successful  
%in studying  
in explaining the processes of spectral energy transfer in several 
%aspects 
areas of modern science, 
%and technologies, 
such as oceanography, plasmas, planetary waves, acoustic turbulence etc. \cite{zakharov2012kolmogorov}.  The central ingredient of the theory is the derivation of the kinetic-wave equations, which describe the spectral energy transfer via $n$-wave resonant processes, and which are in one to one correspondence with the spectral moments \cite{newell2011wave}. Without making a priori assumptions on the statistics of the processes, these equations are 
%indeed 
closed. An important feature of these equations is that their exact equilibrium solutions have constant spectral fluxes of one of the conserved energy densities and the number density. 

The derivation of kinetic equations to describe how weakly interacting waves share their energies in anharmonic crystal lattices in solid state physics go back to Peierls, in the early 30's \cite{Peierls:1993:BRK,Peierls:1960:QTS}. Indeed, to our knowledge, Peirels' model is the first wave turbulence kinetic equation derived. The modern theory has been then developed  based on the pioneering works of Hasselmann \cite{Hasselmann:1962:OTN1}, Benney and Saffmann \cite{benney1966nonlinear}, Kadomtsev \cite{kadomtsev1988collective}, Zakharov \cite{zakharov2012kolmogorov}, Benney and Newell \cite{benney1967propagation}. A great breakthough in the theory is the discovery of the Zakharov-Kolmogorov solution \cite{zakharov1965weak,zakharov1967instability} by using the scaling symmetries of the dispersion relation and the coupling coefficient via the Zakharov transformation \cite{zakharov1965weak,zakharov1967instability,zakharov2012kolmogorov}.

In \cite{l1997statistical}, the authors develop expressions for the nonlinear wave damping and frequency correction of a field of random,
spatially homogeneous, acoustic waves. They derived the 3-wave kinetic equation of acoustic waves, describing the evolution of the density distribution function $f$ of the waves, in which the distribution $f(t,p)$ is a function of time $t$ and wave number $p$. If we denote 
$$f_1=f(t,p_1), f_2=f(t,p_2), f_3=f(t,p_3),$$
then $f_1$ satisfies
\begin{eqnarray}
\label{QuantumBoltzmann1}
\frac{\partial f_1}{\partial t}&=&Q[f_1],
\end{eqnarray}
where
\begin{equation}\begin{aligned}
Q[f_1]:=\ &\ \int_{\mathbb{R}^{6}}\mathcal V_{p_1, p_2, p_3}\delta(p_1-p_2-p_3)\delta(\omega_{p_1}-\omega_{p_2}-\omega_{p_3})[f_2f_3-f_1(f_2+f_3)]dp_2dp_3\\
\ & \ -2\int_{\mathbb{R}^{6}}\mathcal V_{p_1, p_2, p_3}\delta(p_2-p_1-p_3)\delta(\omega_{p_2}-\omega_{p_1}-\omega_{p_3})[f_1f_3-f_2(f_1+f_3)]dp_2dp_3.\end{aligned}
\end{equation}
The collision kernels
 $\mathcal V_{p_1, p_2, p_3}\geq 0$ are radially symmetric, and symmetric with respect to the permutation of $p_1$, $p_2$, $p_3,$:
\begin{align*}
&~~\mathcal V_{p_1, p_2, p_3}=\lambda|p_1||p_2||p_3|,
\end{align*}
where $|p|$ denotes the length of the vector $p$, and $\lambda$ is a positive constant that we can assume without loss of generality to be $1$.   In the case of acoustic waves, the dispersion relation $\omega_p=\omega (p)$ is given by the phonon dispersion law:
\begin{equation}
\omega (p)=|p|. \label{E3a}
\end{equation}

In this paper, besides the exact resonance equation \eqref{QuantumBoltzmann1}, we also consider the following near-resonance turbulence kinetic equation for acoustic waves
\begin{eqnarray}
\label{NRQuantumBoltzmann1}
\frac{\partial f_1}{\partial t}&=&Q^{NR}[f_1],
\end{eqnarray}
where
\begin{equation}\begin{aligned}
Q^{NR}[f_1]:=\ & \ \int_{\{p_1=p_2+p_3,|\omega_{p_1}-\omega_{p_2}-\omega_{p_3}|\le \Lambda\}}\mathcal V_{p_1, p_2, p_3}[f_2f_3-f_1(f_2+f_3)]dp_2dp_3\\
\ & \ -2\int_{\{p_2=p_1+p_3,|\omega_{p_2}-\omega_{p_1}-\omega_{p_3}|\le \Lambda\}}\mathcal V_{p_1, p_2, p_3}[f_1f_3-f_2(f_1+f_3)]dp_2dp_3.
\end{aligned}
\end{equation} 
Indeed, equation \eqref{QuantumBoltzmann1} has an explicit expression for the Kolmogorov-type spectrum
of acoustic turbulence
%$$f(p)=|p|^{-3/2}$$
which is just a stationary solution of the equation  \cite{l1997statistical}. Such a solution has not been found for the resonance broadening equation \eqref{NRQuantumBoltzmann1}. In the current paper, we are interested in a different mathematical question on the time-dependent solutions rather than the stationary solution. 

In general, the 3-wave kinetic equation plays an important role in the theory of weak turbulence, and has been rigorously studied in \cite{nguyen2017quantum} for capillary waves, in  \cite{AlonsoGambaBinh,CraciunBinh,EscobedoBinh,GambaSmithBinh} for the phonon interactions in anharmonic crystal lattices, in \cite{SofferBinh3} for acoustic waves, and in \cite{GambaSmithBinh} for stratified  flows in the ocean. 

It is the goal of our work to make the connection between equations \eqref{QuantumBoltzmann1}, \eqref{NRQuantumBoltzmann1} and chemical reaction systems. Understanding the qualitative behavior of deterministically modeled chemical reaction systems has been a subject of great interests during the past four decades \cite{Yu_Craciun_survey}. The main questions  include the existence of positive equilibria, stability properties of equilibria, and the non-extinction, or persistence, of species \cite{Anderson:2001:APG,MR2734052,AngeliLeenheerSontag:2011:PRF,MR2604624,MR2561288,Feinberg:1972:CBI,Feinberg:1995:TEA,MR3199409, Horn:1974:TDO,Yu_Craciun_survey}.  Used to describe an important class of chemical kinetics, toric dynamical systems or complex-balanced systems  \cite{MR2561288,HornJackson:1972:GMA} are among the most important models in chemical reaction network. First introduced by Boltzmann  \cite{Boltzmann} for modeling collisions in kinetic gas theory, the complex-balanced condition was used by Horn and Jackson \cite{Feinberglecture,Gunawardena,Horn:1972:NAS,HornJackson:1972:GMA,Yu_Craciun_survey} to show  that a complex-balanced system has a unique locally stable equilibrium within each linear invariant subspace. Later on,   the name ``toric dynamical system'' was proposed in \cite{MR2561288} to underline the tight connection to the algebraic study of toric varieties. The Global Attractor Conjecture, the most important problem for toric dynamic systems, states that the complex balanced equilibrium of each system is also a globally attracting point within each linear invariant subspace. The Global Attractor Conjecture has been proved in \cite{CraciunNazarovPantea:2013:PAP}  for small dimensional systems, and a solution has been proposed in \cite{Craciun:2015:TDI} for the general case. 

In this paper, {\it we discover, for the first time, the connection between the wave kinetic equation \eqref{QuantumBoltzmann1} and chemical reaction networks}. We prove that the discrete version of \eqref{QuantumBoltzmann1} can be associated with a chemical reaction network which takes the form
\begin{eqnarray*}
&&\ce{A_{k_2} + A_{k_3} -> A_{k_1}}\\
&&\ce{A_{k_2} + A_{k_1} -> 2A_{k_2} + A_{k_3}}
\end{eqnarray*} 
and will be described in detail in Section \ref{Sec:ChemC12}. 
As a consequence, techniques that have been used to study the  Global Attractor Conjecture in chemical reaction network theory can be applied to study the long time behavior of the wave kinetic equation \eqref{QuantumBoltzmann1}. We prove that as time evolves, the solution of the discrete version of \eqref{QuantumBoltzmann1} converges to a steady state exponentially in time.

The Dirac-delta functions in \eqref{QuantumBoltzmann1} imply that the 
%resonant 
spectral energy transfer occurs on the resonant manifold, which is a set of wave vectors $p$, $p_1$, $p_2$ satisfying
\begin{equation}
\label{ResonantManifold}
p=p_1+p_2,\ \ \ \ \ \ \omega_p=\omega_{p_1}+\omega_{p_2}.
\end{equation}
However, in related systems, it is shown that exact resonances with $\omega_p=\omega_{p_1}+\omega_{p_2}$ do not capture some important physical effects
\cite{bartello1995geostrophic,gill1967resonant,greenspan1969non,phillips1968interaction, warn1986statistical}. Therefore one needs to include more physics by  adding near-resonant interactions  
\cite{boldyrev2012residual,connaughton2001discreteness,huang2000,lee2007formation,l1997statistical,lvov2012resonant,newell1969rossby,smith2001,smith2005near,smith1999transfer,smith2002generation}, 
%However, in the derivation of such kinetic equations, the amplitude of each plane wave is assumed to be constant in time, which means the life-time of single plane wave is infinite. This is not true for larger values of nonlinearity where the wave lifetime is finite and amplitude changes. As a result, interactions may not be strictly resonant. A natural way to fix this is to  include  near-resonant interactions (cf. 
%\cite{lvov2004noisy,lee2007formation,remmel2014nonlinear,kurien2014effect,hernandez2014investigation,remmel2010nonlinear,connaughton2001discreteness,lvov2004hamiltonian,lvov2012resonant,l1997statistical,smith2005near}), 
which satisfy
\begin{equation}
\label{NearResonantManifold}
p=p_1+p_2,\ \ \ \ |\omega_p-\omega_{p_1}-\omega_{p_2}|<\Lambda, 
\end{equation}
where $\Lambda$ is the resonance broadening frequency.

By extending the chemical reaction network approach used to study the discrete versions of \eqref{QuantumBoltzmann1}  and \eqref{NRQuantumBoltzmann1}, we prove that:
\begin{itemize}
\item There exists a positive constant $\Lambda_0$ such that when $0\le \Lambda<\Lambda_0$, the solution of the discrete version of \eqref{NRQuantumBoltzmann1} 
converges to an unique equilibrium exponential having the form $|p|^{-1}$ in time, similar to the exact resonance case, which
is the same as the equilibrium solution of the exact resonance case. Note that when $\Lambda=0$ the equation \eqref{NRQuantumBoltzmann1} becomes  \eqref{QuantumBoltzmann1}.
\item There exists a positive constant $\Lambda_1$ such that when $\Lambda\ge \Lambda_1$, the solution of the discrete version of \eqref{NRQuantumBoltzmann1}  exits any compact set  as $t$ tends to infinity.
\end{itemize}

Besides the 3-wave kinetic equation, the rigorous study of 4-wave kinetic equation is a very important subject (see \cite{buckmaster2016analysis,buckmaster2016effective,faou2016weakly,germain2016continuous,germain2017optimal}   and references therein).
Finally, we note that wave-turbulence kinetic equations have very similar form with the quantum Boltzmann equations, used to describe the evolution of diluted bose gases at high and low temperature. We refer to the book \cite{PomeauBinh} or the papers \cite{JinBinh,ToanBinh,tran2019boltzmann,ReichlTran,saint2004kinetic,SofferBinh2,SofferBinh1} and the references therein for more discussions on the latter topic.

The plan of our paper is the following:
\begin{itemize}
\item In section \ref{Sec:Q12}, we show that the discrete version of equation \eqref{QuantumBoltzmann1} can be rewritten as a chemical reaction network. By using an approach inspired by the theory of toric dynamical system, we prove in Theorem \ref{TheoremQ12} that the solution of the discrete version of \eqref{QuantumBoltzmann1} converges to the equilibrium exponentially in time.
\item In section \ref{NRSec:Q12}, we generalize Theorem \ref{TheoremQ12} to  the near resonance case \eqref{NRQuantumBoltzmann1}. We prove that depending on the resonance broadening frequency $\Lambda$, the solution of the discrete version of the equation may   converge to the equilibrium  exponentially or go to infinity as time evolves. 
%Chemical reaction networks for Kirkpatrick-Dorfman and Reichl-Gust models are given in  Section \ref{CRNReichl}.
\end{itemize}
\section{A reaction network approach to the exact resonance equation}\label{Sec:Q12}
\subsection{The dynamical system associated with the exact resonance equation}
Let us consider the discrete version of  \eqref{QuantumBoltzmann1}, which is described below.
\\\\ Let $\mathcal{L}_R$ denote the lattice of integer points 
$$\mathcal{L}_R=\{p\in\mathbb{Z}^3, |p|<R\}.$$
The discrete version of the quantum Boltzmann equation \eqref{QuantumBoltzmann1} reads 
\begin{equation}\label{DiscreteQuantum}\begin{aligned}
\dot{f}_{p_1}=
&\sum_{\substack{p_2,p_3\in\mathcal{L}_R,\\ p_1-p_2-p_3=0, \\ \omega(p_1)-\omega(p_2)-\omega(p_3)=0}}\mathcal{V}_{p_1,p_2,p_3}\left[f_{p_2}f_{p_3}-f_{p_1}(f_{p_2}+f_{p_3})\right]\\
&\quad -2\sum_{\substack{p_2,p_3\in\mathcal{L}_R,\\ p_1+p_2-p_3=0,\\ \omega(p_1)+\omega(p_2)-\omega(p_3)=0}}\mathcal{V}_{p_1,p_2,p_3}\left[f_{p_1}f_{p_2}-f_{p_3}(f_{p_1}+f_{p_2})\right],~~\end{aligned}
\end{equation}
for all $p_1$ in $\mathcal{L}_R$, where  $\omega(p)$ is defined in \eqref{E3a}.
\subsection{Decoupling the exact resonance equation}
 Observe  that when $p_1=0$, $\mathcal{V}_{p_1,p_2,p_3}$ is also $0$, and therefore, 
\begin{eqnarray}\label{DiscreteQuantumIndex0}
\dot{f}_0=0,
\end{eqnarray}
which implies that $f_0(t)$ is a constant for all $t\ge0$. Moreover, $f_{p_1}$ does not depend on $f_0$ for all $p_1\ne 0$. Therefore, without loss of generality, we can suppose that $f_0(0)=0$, which leads to $f_0(t)=0$ for all $t$.

Taking into account the fact  $\omega(p)=c|p|$, note that if $p_1,p_2,p_3\in\mathcal{L}_R$ are different from $0$, and for $p_3=p_1+p_2$ and $|p_3|=|p_1|+|p_2|$ (like in the second sum of \eqref{DiscreteQuantum}), then $p_1,p_2,p_3$ must be collinear and on the same side of the origin. Therefore, we infer that there exists a vector $P$  and  $k_1$, $k_2$, $k_3>0$, $k_1,k_2,k_3\in\mathbb{Z}$ such that 
$$p_1=k_1 P;~~~p_2=k_2 P;~~~p_3=k_3 P,~~~k_1+k_2=k_3.$$  
Since $\mathcal{L}_R$ is bounded, it follows that $k_1,k_2,k_3$ belong to a finite set of integer indices $\mathbb{I}=\{1,\dots,I\}$.
Arguing similarly for the first sum in \eqref{DiscreteQuantum}, we deduce that \eqref{DiscreteQuantum} is equivalent to the following system for $k_1\in\mathbb{I}$
\begin{equation}\label{DiscreteQuantum1DVersion2}\begin{aligned}
\dot{f}_{Pk_1}
=&\quad\sum_{\substack{k_2,k_3\in
\mathbb{I},\\ k_1-k_2-k_3=0}}\mathcal{V}_{Pk_1,Pk_2,Pk_3}\left[f_{Pk_2}f_{Pk_3}-f_{Pk_1}(f_{Pk_2}+f_{Pk_3})\right]\\
&\quad-{2}\sum_{\substack{k_2,k_3\in
\mathbb{I},\\ k_1+k_2-k_3=0}}\mathcal{V}_{Pk_1,Pk_2,Pk_3}\left[f_{Pk_1}f_{Pk_2}-f_{Pk_3}(f_{Pk_1}+f_{Pk_2})\right].
\end{aligned}
\end{equation}
Note that the system of equations \eqref{DiscreteQuantum1DVersion2} shows a {\it decoupling} of the system of equations \eqref{DiscreteQuantum} along a ray $\{kP_0\}$ with $k>0$ (see Figure 1). As a consequence, it is sufficient to study the system of equations \eqref{DiscreteQuantum1DVersion2} for a fixed value of $P_0$, instead of the system of equations \eqref{DiscreteQuantum}.

If we denote $f_{k_1P_0}$ by $f_{k_1}$ (with $k_1\in \mathbb{I}$) and $\mathcal{V}^{12}_{k_1P_0,k_2P_0,k_3P_0}$  by $\mathcal{V}^{12}_{k_1,k_2,k_3}$, with an abuse of notation, we obtain the following new system for the ray $\{k_1P_0|k_1>0\}$:
\begin{equation}\label{DiscreteQuantum1D}\begin{aligned}
\dot{f}_{k_1}=&\quad\sum_{\substack{k_2,k_3\in
\mathbb{I},\\ k_1=k_2+k_3}}\mathcal{V}_{k_1,k_2,k_3}[f_{k_2}f_{k_3}-f_{k_1}(f_{k_2}+f_{k_3})]
\\
&\quad-2\sum_{\substack{k_2,k_3\in
\mathbb{I},\\ k_1+k_2=k_3}}\mathcal{V}_{k_1,k_2,k_3}[f_{k_1}f_{k_2}-f_{k_3}(f_{k_1}+f_{k_2})], ~\forall k_1\in\mathbb{I}.\end{aligned}
\end{equation}
The {\it conservation of energy} then follows
\begin{equation}\label{MassConservation1}
\sum_{k=1}^I k\dot{f_k}=0,
\end{equation}
or equivalently
\begin{equation}\label{MassConservation2}
\sum_{k=1}^I k{f_k}=\mbox{const}.
\end{equation}
These conservation relations also follow easily from the rewriting of equations \eqref{DiscreteQuantum1D} as a mass-action system, as described in the next section.
Also by abuse of notation, we denote this discrete version of $Q$ by
\begin{equation}\label{DiscreteC12}
\begin{aligned}
\mathcal{Q}[f_{k_1}]:= &~~\sum_{k_2+k_3=k_1}\mathcal{V}_{k_1,k_2,k_3}[f_{k_2}f_{k_3}-f_{k_1}(f_{k_2}+f_{k_3})]\\
&~~-2\sum_{k_1+k_2=k_3}\mathcal{V}_{k_3,k_1,k_2}[f_{k_1}f_{k_2}-f_{k_3}(f_{k_1}+f_{k_1})].
\end{aligned}
\end{equation} 

\subsection{The chemical reaction network associated with the exact resonance equation}\label{Sec:ChemC12}
For $X\in\mathbb{R}_{>0}^n$ and $\alpha\in \mathbb{R}_{\geq 0}^n$, we denote by $X^\alpha$ the monomial $\Pi_{i=1}^n X_i^{\alpha_i}$.
\begin{definition} Consider a chemical reaction of the form
$$\ce{\alpha_1 A_1  +  \alpha_2 A_2  +  ...    +  \alpha_n A_n ->[\mathcal{V}] \beta_1 A_1  +  \beta_2 A_2  +  ...  +  \beta_n A_n},$$ 
where $\mathcal{V}$ is a positive parameter, called  reaction rate constant. Then the mass-action dynamical system generated by this reaction is
\begin{equation}\label{ExampleEquation}
\dot{X}= \mathcal{V} X^\alpha (\beta-\alpha),
\end{equation}
where $\alpha=(\alpha_1,\cdots,\alpha_n)^T$, $\beta=(\beta_1,\cdots,\beta_n)^T$, $\alpha_i,\beta_i \ge 0$ and  $X=(X_1,\cdots,X_n)^T$, in which $X_i$ is the concentration of the chemical species $A_i$. For the case of a network that contains several reactions
$$\ce{\alpha_1^j A_1^j  +  \alpha_2^j A_2^j  +  ...   +   \alpha_n^j A_n^j ->[\mathcal{V}_j] \beta_1 A_1^j  +  \beta_2^j A_2^j  +  ...   +  \beta_n^j A_n^j},$$ 
for $1\le j\le m$, its associated mass-action dynamical system is given by
\begin{equation}\label{DynSys}
\dot{X}=\sum_{j=1}^m \mathcal{V}_jX^{\alpha^j}(\beta^j-\alpha^j).
\end{equation}
\end{definition}

In this section, we will show  that the system \eqref{DiscreteQuantum1D} has the  form \eqref{DynSys} for a well-chosen set of reactions. 
We will derive  the system \eqref{DiscreteQuantum1D} from the network of chemical reactions of the form:
\begin{eqnarray}\label{BioChemEq01}
&&\ce{A_{k_2} + A_{k_3} -> A_{k_1}}\\\label{BioChemEq02}
&&\ce{A_{k_2} + A_{k_1} -> 2A_{k_2} + A_{k_3}},
\end{eqnarray}
for all $k_1,k_2,k_3$ in $\mathbb{I}$ such that $k_2+k_3=k_1$. If we denote by $F_k$ the concentration of the species $X_k$, we will show that, {\it for appropriate choices of the reaction rate constants} in \eqref{BioChemEq01} and \eqref{BioChemEq02}, the differential equations satisfied by $F_k$ according the mass-action kinetics are exactly the same as \eqref{DiscreteQuantum1D}.\\

In order to describe the connection between the mass-action system given by reactions of the form \eqref{BioChemEq01}-\eqref{BioChemEq02} and our system \eqref{DiscreteQuantum1D}, we need to consider several cases.

\bigskip

{\it Case 1:} For $k_2+k_3=k_1$, $k_2\ne k_3$, $k_1,k_2,k_3\in\mathbb{I}$, we consider 
\begin{eqnarray}\label{BioChemEq1a}
&&\ce{A_{k_2} + A_{k_3} ->[2\mathcal{V}_{k_1,k_2,k_3}] A_{k_1}}\\\label{BioChemEq1b}
&&\ce{A_{k_2} + A_{k_1} ->[2\mathcal{V}_{k_1,k_2,k_3}] 2A_{k_2} + A_{k_3}},
\end{eqnarray}
and for the  reaction \eqref{BioChemEq1a}-\eqref{BioChemEq1b}, we choose the reaction rate constants of the three reactions $A_{k_2}+A_{k_3}\to A_{k_1}$, $A_{k_2}+A_{k_1}\to 2A_{k_2}+A_{k_3}$ to be $2\mathcal{V}_{k_1,k_2,k_3}$.
For example,  consider the  reaction \eqref{BioChemEq1a}:
%$$\ce{X_{k_2} + X_{k_3}<->[\mathcal{V}(k_1,k_2,k_3)] X_{k_1}}.$$
in this reaction, $A_{k_1}$ is created from $A_{k_2}+A_{k_3}$ with the rate $2\mathcal{V}_{k_1,k_2,k_3}F_{k_2}F_{k_3}$. Therefore, the rate of change of the species $A_{k_1}$ due to this reaction is $ 2\mathcal{V}_{k_1,k_2,k_3}F_{k_2}F_{k_3}$.
For the  reaction  \eqref{BioChemEq1b},
%$$\ce{X_{k_2} + X_{k_1}  ->[\mathcal{V}(k_1,k_2,k_3)] 2X_{k_2} + X_{k_3}},$$
 $A_{k_1}$ is lost with the rate $-2\mathcal{V}_{k_1,k_2,k_3}F_{k_2}F_{k_1}$ to create $2A_{k_2}+A_{k_3}$. Therefore the rate of change of the species $A_{k_1}$ due to this reaction is $ -2\mathcal{V}_{k_1,k_2,k_3}F_{k_2}F_{k_1}$. By exchanging the roles of $A_{k_2}$ and $A_{k_3}$ in \eqref{BioChemEq1b}, we obtain the rate $ -2\mathcal{V}_{k_1,k_2,k_3}[F_{k_2}F_{k_1}+F_{k_3}F_{k_1}]$.
Therefore, the total rate of change of $A_{k_1}$ due to the reactions in \eqref{BioChemEq1a}-\eqref{BioChemEq1b} is
\begin{equation}\label{FirstChangeRate}
\begin{aligned}
&2\mathcal{V}_{k_1,k_2,k_3}[F_{k_2}F_{k_3}- F_{k_2}F_{k_1}-F_{k_3}F_{k_1}].
\end{aligned}
\end{equation}

\bigskip

{\it Case 2:}  For $k_2+k_3=k_1$, $k_2=k_3$,  we obtain $2k_2=k_1$, $k_1,k_2\in\mathbb{I}$, and we consider
\begin{eqnarray}\label{BioChemEq1a2}
&&\ce{2A_{k_2} ->[\mathcal{V}_{k_1,k_2,k_3}] A_{k_1}}\\\label{BioChemEq1b2}
&&\ce{A_{k_2} + A_{k_1} ->[2\mathcal{V}_{k_1,k_2,k_3}] 3A_{k_2} }.
\end{eqnarray}
We choose the reaction rate constant of $2A_{k_2}\to A_{k_1}$  to be $\mathcal{V}_{k_1,k_2,k_3}$. Also, we choose the reaction rate constant of $A_{k_2}+A_{k_1}\to 3A_{k_2}$ to be $2\mathcal{V}_{k_1,k_2,k_3}$.
Consider the first reaction \eqref{BioChemEq1a2}:
%$$\ce{X_{k_2} + X_{k_3}<->[\mathcal{V}(k_1,k_2,k_3)] X_{k_1}}.$$
In this reaction, $A_{k_1}$ is created from $2A_{k_2}$ with the rate $\mathcal{V}_{k_1,k_2,k_2}F_{k_2}^2$. The rate of change of the species $A_{k_1}$ is $\mathcal{V}_{k_1,k_2,k_2}F_{k_2}^2$. 
For the second reaction  \eqref{BioChemEq1b2}:
%$$\ce{X_{k_2} + X_{k_1}  ->[\mathcal{V}(k_1,k_2,k_3)] 2X_{k_2} + X_{k_3}},$$
 $A_{k_1}$ is lost with the rate $-2\mathcal{V}_{k_1,k_2,k_2}F_{k_2}F_{k_1}$ to create $3A_{k_2}$. 
As a result, the rate of change of $A_{k_1}$ due to the reactions  \eqref{BioChemEq1a2}-\eqref{BioChemEq1b2} is
\begin{equation}\label{FirstChangeRate2}
\begin{aligned}
&\mathcal{V}_{k_1,k_2,k_3}[F_{k_2}^2- 2F_{k_2}F_{k_1}].
\end{aligned}
\end{equation}

In order to compute the total  rate of change of $A_{k_1}$, we need the combination of \eqref{BioChemEq1a}-\eqref{BioChemEq1b}, \eqref{BioChemEq1a2}-\eqref{BioChemEq1b2} and 
\begin{eqnarray}\label{BioChemEq1a3}
&&\ce{A_{k_1} + A_{k_3} ->[2\mathcal{V}_{k_2,k_1,k_3}] A_{k_2}}\\\label{BioChemEq1b3}
&&\ce{A_{k_2} + A_{k_1} ->[2\mathcal{V}_{k_2,k_1,k_3}] 2A_{k_1} + A_{k_3}},
\end{eqnarray}
which, by  \eqref{FirstChangeRate}, \eqref{FirstChangeRate2}, implies

\begin{equation}\label{TotalChangeRate}
\begin{aligned}
\dot{F}_{k_1}\ =
&~~\sum_{k_2+k_3=k_1, k_2< k_3}2\mathcal{V}_{k_1,k_2,k_3}[F_{k_2}F_{k_3}-F_{k_1}(F_{k_2}+F_{k_3})]\\
&~~+\sum_{2k_2=k_1}\mathcal{V}_{k_1,k_2,k_2}[F_{k_2}F_{k_2}-F_{k_1}(F_{k_2}+F_{k_2})]\\
&~~-\sum_{k_1+k_3=k_2}2\mathcal{V}_{k_2,k_1,k_3}[F_{k_1}F_{k_3}-F_{k_2}(F_{k_1}+F_{k_3})],
\end{aligned}
\end{equation} 
which can be written as
\begin{equation}\label{TotalChangeRateEquation}
\begin{aligned}
\dot{F}_{k_1}\ =&~~\sum_{k_2+k_3=k_1}\mathcal{V}_{k_1,k_2,k_3}[F_{k_2}F_{k_3}-F_{k_1}(F_{k_2}+F_{k_3})]\\
&~~-2\sum_{k_1+k_3=k_2}\mathcal{V}_{k_2,k_1,k_3}[F_{k_1}F_{k_3}-F_{k_2}(F_{k_1}+F_{k_3})].
\end{aligned}
\end{equation} 
Equation (\ref{TotalChangeRateEquation}) shows that the system of differential equations satisfied by the concentrations $F_k$ is exactly the same as the system of differential equations \eqref{DiscreteQuantum1D} satisfied by the densities $f_k$.

\subsection{A Lyapunov function inspired by the associated reaction network}\label{Sec:LyapC12}

{\em The  observation above is very interesting because it shows a very strong connection between our system \eqref{DiscreteQuantum1D} and the  reaction network model \eqref{BioChemEq01}-\eqref{BioChemEq02}.}  As a consequence, the techniques developed to study the  Global Attractor Conjecture \cite{Craciun:2015:TDI} in  reaction network theory can be applied to study \eqref{DiscreteQuantum1D}. One of the key ingredients in proving the convergence to global attractors of detailed balanced or complex balanced reaction network models is that these networks have a specific type of Lyapunov functions. Then our system of interest \eqref{DiscreteQuantum1D} may also have a similar Lyapunov function. By using a change of variable similar to the approach in \cite{CraciunBinh}, we can create a global Lyapunov function (related to  Boltzmann's original H-theorem calculations), as follows.

To illustrate this idea in a very simple way, we select in the  system above  three values $k_1, k_2, k_3$ such that $k_1 + k_2 = k_3$, and suppose $\mathcal{V}_{k_1,k_2,k_3}=1$. Then this simplified ``sub-system" can be rewritten as
\begin{equation}
\begin{aligned}
\frac{d}{dt}\left( {\begin{array}{cc}
  {F_{k_1}}\\        {F_{k_2}} \\     {F_{k_3}}  \end{array} } \right) \ = \ (F_{k_1}F_{k_3}+F_{k_2}F_{k_3}-F_{k_1}F_{k_2})\left(\left( {\begin{array}{cc}
  1\\        1 \\     0  \end{array} } \right)-\left( {\begin{array}{cc}
  0\\        0 \\     1  \end{array} } \right) \right),
\end{aligned}
\end{equation}
which can be developed into
\begin{equation}
\begin{aligned}
\frac{d}{dt}\left( {\begin{array}{cc}
  {F_{k_1}}\\        {F_{k_2}} \\     {F_{k_3}}  \end{array} } \right) \ = \ F_{k_1}F_{k_2}F_{k_3}\left(\frac{1}{F_{k_2}}+\frac{1}{F_{k_1}}-\frac{1}{F_{k_3}}\right)\left(\left( {\begin{array}{cc}
  1\\        1 \\     0  \end{array} } \right)-\left( {\begin{array}{cc}
  0\\        0 \\     1  \end{array} } \right)\right).
\end{aligned}
\end{equation}
Consider the change of variables
$$G_k=\exp\left(\frac{1}{F_k}\right), \ \ \ \ F_k=\frac{1}{\log(G_k)},$$
then  we have
$$\frac{1}{F_{k_1}}+\frac{1}{F_{k_2}}-\frac{1}{F_{k_3}} \ = \ \log(G_{k_1}G_{k_2})-\log(G_{k_3}).$$
We then get
\begin{equation}
\begin{aligned}
\frac{d}{dt}\left( {\begin{array}{cc}
  {G_{k_1}}\\        {G_{k_2}} \\     {G_{k_3}}  \end{array} } \right) \ = \ \mathrm{diag} \left( {\begin{array}{cc}
  -G_{k_1}\log^2(G_{k_1})\\        -G_{k_2}\log^2(G_{k_2}) \\     -G_{k_3}\log^2(G_{k_3})  \end{array} } \right)\frac{1}{\log G_{k_1}\log G_{k_2}\log G_{k_3}}[\log(G_{k_1}G_{k_2})-\log(G_{k_3})].
\end{aligned}
\end{equation}
By following the approach in \cite{CraciunBinh}, we construct a Lyapunov function for the system in variables $G_k$ given by
$$L(G)\  =\  \sum_{k=1}^I \log \log(G_k),$$
which, when translated back into the $F_k$ coordinates has a very simple form:
$$L(F)\  = \ \sum_{k=1}^I \log\left(\frac{1}{F_k}\right)\ = \ - \sum_{k=1}^I \log(F_k).$$ 
In the next section, we will see that this construction does indeed give rise to a global  strict Lyapunov function for our (full) dynamical system of interest.

\bigskip

\subsection{Convergence to equilibrium}
%\begin{theorem}\label{TheoremQ12} For any initial condition, the quantum Boltzmann equation \eqref{DiscreteQuantum} convergences to an equilbrium state
%\begin{equation}
%G^*=\begin{bmatrix}
%G^*_1 \\\cdots    \\ G^*_k\\\cdots\\    G^*_I \end{bmatrix}=\begin{bmatrix}
%\frac{1}{e^{\rho}} \\\cdots    \\ \frac{1}{e^{\rho k}} \\\cdots\\    \frac{1}{e^{\rho I}} \end{bmatrix},
%\end{equation}
%where $\rho$ is some positive constant depending. If the solution is given by then $G(t)$
%\begin{equation}
%G(t)=\begin{bmatrix}
%G_1(t) \\\cdots    \\ G_k(t)\\\cdots\\    G_I(t) \end{bmatrix},
%\end{equation}
%converges to $G^*$ exponentially as $t$ tends to infinity in the following sense:
%\\ There exist positive constants $C_1$ and $C_2$, such that
%\begin{equation}\label{ExponentialDecay}
%\max\{|G_1(t)-G_1^*|,\cdots, |G_I(t)-G_I^*|\}\leq C_1 e^{-C_2 t},
%\end{equation}
%which implies that 
%there exist positive constants $C_3$ and $C_4$ satisfying
%\begin{equation}\label{ExponentialDecay2}
%\max\{|F_1(t)-F_1^*|,\cdots, |F_I(t)-F_I^*|\}\leq C_3 e^{-C_4 t},
%\end{equation}
%for $$F_k=\frac{G_k}{1-G_k},$$ and $$F_k^*=\frac{G_k^*}{1-G_k^*},$$
%$k=1,\cdots,I$.
%\end{theorem}

\begin{theorem}\label{TheoremQ12} 

For any positive initial condition, the solution $$f(t)=(f_p(t))_{p\in\mathcal{L}_R}$$ of the discrete wave turbulence equation \eqref{DiscreteQuantum} converges to an equilibrium state $f^*=(f_p^*)_{p\in\mathcal{L}_R}$. For each ray $\{kP_0\}_{k\geq 1}$ there exists a positive constant $\rho({P_0})$ such that if $p=kP_0$ then $$f^*_p=\frac{1}{{k\rho({P_0})}}.$$ Moreover, the solution $f(t)$ of \eqref{DiscreteQuantum} converges to $f^*$ exponentially fast, in the following sense: there exist positive constants $C_1$, $C_2$ such that $$\max_{p\in\mathcal{L}_R}|f_p(t)-f_p^*|<C_1e^{-C_2t}.$$
\end{theorem}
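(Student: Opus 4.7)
By the decoupling established in Section~\ref{Sec:Q12}, it suffices to work one ray at a time and analyze the finite ODE system~\eqref{DiscreteQuantum1D} on $\mathbb{I} = \{1,\dots,I\}$. A direct substitution shows that $f_k^* = 1/(k\rho)$ annihilates every bracket in~\eqref{DiscreteQuantum1D}: for $k_1 = k_2+k_3$,
\[
f_{k_2}^* f_{k_3}^* - f_{k_1}^*(f_{k_2}^* + f_{k_3}^*) = \frac{1}{\rho^2 k_2 k_3} - \frac{1}{\rho k_1}\cdot\frac{k_2+k_3}{\rho k_2 k_3} = 0.
\]
Combining this with the conservation law $\sum_{k\in\mathbb{I}} k f_k(t) \equiv E$ established in~\eqref{MassConservation2} pins down $\rho = I/E$ uniquely, so on each ray there is precisely one admissible equilibrium $f^*$.

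Next, I will upgrade the heuristic Lyapunov function of Section~\ref{Sec:LyapC12}, namely $L(f) = -\sum_{k\in\mathbb{I}} \log f_k$, to a strict Lyapunov function. Introducing $g_k = 1/f_k$ and using the identities
\[
f_b f_c - f_a(f_b+f_c) = \frac{g_a - g_b - g_c}{g_a g_b g_c},\qquad -\frac{1}{f_a}+\frac{1}{f_b}+\frac{1}{f_c} = -(g_a - g_b - g_c),
\]
one regroups $\dot L = -\sum_k \dot f_k/f_k$ according to triples $\{a,b,c\}$ with $a = b+c$. After accounting for the degenerate case $b=c$ (where both a combinatorial factor of $2$ from the first sum and a factor of $2$ from the second sum of~\eqref{DiscreteQuantum1D} must be tracked), every triple contributes a nonpositive signed square and
\[
\dot L = -\sum_{\substack{a,b,c\in\mathbb{I}\\ a=b+c}} \frac{\kappa_{a,b,c}\,\mathcal{V}_{a,b,c}}{g_a g_b g_c}\,(g_a - g_b - g_c)^2 \leq 0,
\]
with positive combinatorial factor $\kappa_{a,b,c}\in\{1,2\}$. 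Equality forces $g_a = g_b + g_c$ whenever $a = b+c$, which by induction on $k$ yields $g_k = k g_1$, i.e.\ $f = f^*$.

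To pass from $\dot L \leq 0$ to exponential convergence, I will first close off the interior orbit: the bound $f_k(t) \leq E/k$ from energy conservation combined with monotonicity $L(f(t)) \leq L(f(0))$ yields a uniform lower bound $f_k(t) \geq \delta > 0$, so the trajectory is confined to a compact subset of the open positive orthant. LaSalle's invariance principle then forces $f(t) \to f^*$. For the rate, I will linearize at $f^*$ on the energy hyperplane $\sum k \varepsilon_k = 0$: since $L$ is strictly convex in $f$ and $f^*$ is its unique constrained minimizer, the Hessian of $L$ at $f^*$ is positive definite on the hyperplane, and to leading order $-\dot L$ is also a positive definite quadratic form in $f - f^*$. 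The resulting local entropy-dissipation inequality $-\dot L \geq c\,(L - L(f^*))$, combined with Gronwall and the already established global convergence, gives $\|f(t) - f^*\| \leq C_1 e^{-C_2 t}$ on each ray; taking the maximum over the finitely many rays inside $\mathcal{L}_R$ produces the claimed uniform bound.

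\textbf{Anticipated difficulty.} The algebraic heart of the argument is the perfect-square identity for $\dot L$; the delicate point is tracking the factors of $2$ between the symmetric off-diagonal triples ($b \neq c$) and the diagonal ones ($b = c$) and verifying that these combine with the $-2$ in the second sum of~\eqref{DiscreteQuantum1D} to produce a uniformly signed expression. The genuine analytic obstacle, however, is converting the local spectral gap at $f^*$ into a global exponential rate with effective constants $C_1, C_2$; this requires either extending the entropy-dissipation inequality from a neighborhood of $f^*$ to the entire compact invariant set, or a uniform-in-time comparison between the linearized and full dynamics, both of which depend on the finite-dimensional structure of the ray and on the explicit lower bound $\delta$ obtained from persistence.
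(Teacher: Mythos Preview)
Your proposal is correct, and Steps 1--2 (the Lyapunov function $L=-\sum\log f_k$, the perfect-square dissipation identity, the equilibrium characterization, persistence from $L\to\infty$ at the boundary plus the energy bound, and LaSalle) coincide with the paper's argument essentially line for line. The divergence is in the exponential-rate step. The paper introduces the change of variables $G_k=\exp(1/F_k)$, recasts the dynamics as a \emph{reversible} reaction network $A_{k_1}+A_{k_2}\leftrightarrow A_{k_3}$ with non-mass-action rates, and then invokes the Craciun--Feinberg Jacobian machinery to show that the linearization at $G^*$ is negative definite on the stoichiometric subspace, whence local (and by global convergence, eventually global) exponential decay. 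You instead stay in the $f$-variables and run a direct entropy--entropy-dissipation argument: both $L-L(f^*)$ and $-\dot L$ are, to leading order, positive-definite quadratic forms on the tangent $\{\sum k\varepsilon_k=0\}$, so $-\dot L\ge c\,(L-L(f^*))$ near $f^*$, and global convergence supplies the entry time into that neighborhood. Your route is more elementary and self-contained (no exponential substitution, no external Jacobian results), while the paper's route makes explicit the structural reason the linearization works, namely the hidden reversibility of the network in the $G$-coordinates. One minor point: your combinatorial factor $\kappa_{a,b,c}\in\{1,2\}$ is simply the ordered-vs-unordered bookkeeping that the paper absorbs by summing over ordered pairs $(k_1,k_2)$; either convention gives the same dissipation formula.
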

\begin{proof}
By using the decoupling and the change of variables discussed in the previous sections, for each ray $\{kP_0\}_{k\geq 1}$, we can reduce the study of $f$ to $F$, which satisfies \eqref{TotalChangeRateEquation}. 
\bigskip
{\\\it Step 1: The Lyapunov function.}
We define the  function
\begin{equation}\label{Lyapunov}
L(F)=-\sum_{k=1}^I \log (F_k),
\end{equation}
and we will show that $L$ is a Lyapunov function for  \eqref{TotalChangeRateEquation}.
%Using (\ref{Lyapunov}),
Indeed, we have
\begin{equation}
\nabla L=\left( {\begin{array}{cc}
  -\frac{1}{F_1}\\       \cdots \\   -\frac{1}{F_I}  \end{array} } \right).
\end{equation}
By defining
\begin{equation}
e_k=\left( {\begin{array}{cc}
 0\\       \cdots \\ 1\\  \cdots \\  0  \end{array} } \right),
\end{equation}
in which the number $1$ stands for the $k$-th coordinate,
we then have
\begin{equation}
\begin{aligned}
& \ \dot{F}\cdot\nabla L = \\
 =  & \sum_{k_1+k_2=k}\mathcal{V}_{k_1,k_2,k}[F_{k_1}F_{k_2}-F_kF_{k_1}-F_kF_{k_2}][e_k-e_{k_1}-e_{k_2}]\cdot \left( {\begin{array}{cc}
  -\frac{1}{F_1}\\       \cdots \\   -\frac{1}{F_I}  \end{array} } \right) = \\
 =  & \sum_{k_1+k_2=k}\mathcal{V}_{k_1,k_2,k}F_kF_{k_1}F_{k_2}\left[\frac{1}{F_k}-\frac{1}{F_{k_1}}-\frac{1}{F_{k_2}}\right][e_k-e_{k_1}-e_{k_2}]\cdot \left( {\begin{array}{cc}
  -\frac{1}{F_1}\\       \cdots \\   -\frac{1}{F_I}  \end{array} } \right)\\
 =  &- \sum_{k_1+k_2=k}\mathcal{V}_{k_1,k_2,k}F_kF_{k_1}F_{k_2}\left[\frac{1}{F_k}-\frac{1}{F_{k_1}}-\frac{1}{F_{k_2}}\right]^2\\
 \leq  & \  0.
\end{aligned}
\end{equation}

In this case the vector $\beta-\alpha$ in  \eqref{ExampleEquation} takes the form $e_k-e_{k_1}-e_{k_2}$.
Also, note that the above inequality is strict unless
\begin{equation}\label{C12equilibrium}
\frac{1}{F_k}
=\frac{1}{F_{k_1}}+\frac{1}{F_{k_2}},\end{equation}
for all  $k=k_1+k_2.$
Equation \eqref{C12equilibrium}  implies that at equilibrium   $F_k^*=\frac{1}{\rho k}$, for some positive constant $\rho$. 
By the conservation relation
$$\sum_{k=1}^Ik{F_k}=\sum_{k=1}^Ik{F_k^*},$$
we deduce that $\rho$ is unique, i.e., \begin{equation}
F^*=\left( {\begin{array}{cc}
  \frac{1}{\rho} \\       \cdots \\  \frac{1}{ \rho I} \end{array} } \right)
\end{equation} is the only equilibrium point that satisfies the same conservation relation as the initial condition.
\\\bigskip\\
{\it Step 2: Differential inclusions and persistence.}

Since the Lyapunov function $L$ is infinite on the boundary,
% and 
we conclude that the system is persistent. 
Therefore, by using the existence of the globally defined strict Lyapunov function $L$, and the LaSalle invariance principle, it follows that all trajectories converge to the unique positive equilibrium $F^*$ that we discussed in~Step~1.
 \bigskip
{\\\\\it Step 3: Exponential rate of convergence.}

Despite the fact that the system \eqref{BioChemEq1a}-\eqref{BioChemEq1b} is irreversible, in this step, we will use methods that work well for reversible systems. To this end, we will introduce a change of variable technique, to convert \eqref{BioChemEq1a}-\eqref{BioChemEq1b} into the system for the reaction $A_{k_1}+A_{k_2}\leftrightarrow A_{k_3}$, which is indeed  reversible, but with reaction rate functions different from mass action kinetics. In other words, by converting the irreversible system \eqref{BioChemEq1a}-\eqref{BioChemEq1b} into a reversible one, we can then employ techniques originally designed for reversible systems.

Let us start with our  change of variable, which takes the following form $$G_k=\exp\left(\frac{1}{F_k}\right),$$
then 
$$F_k=\frac{1}{\ln(G_k)},$$
and 
$$
\begin{aligned}
F_{k_1}F_{k_2}-F_{k_1}F_{k_3}-F_{k_2}F_{k_3}  \ & \ =\frac{\ln(G_{k_1})-\ln(G_{k_2}G_{k_3})}{\ln(G_{k_1})\ln(G_{k_2})\ln(G_{k_3})}\\
\ & \ =\frac{\ln(G_{k_1})-\ln(G_{k_2}G_{k_3})}{\ln(G_{k_1})\ln(G_{k_2})\ln(G_{k_3})(G_{k_1}-G_{k_2}G_{k_3})}(G_{k_1}-G_{k_2}G_{k_3}).
\end{aligned}$$
Notice that $0<F_k<\infty$ and $1<G_k<\infty$. Moreover,
$$\frac{\ln(G_{k_1})-\ln(G_{k_2}G_{k_3})}{G_{k_1}-G_{k_2}G_{k_3}}>0, \forall k_1,k_2,k_3\in I.$$
We will now study $G_k$ instead of $F_k$. To do this, we convert the system  \eqref{TotalChangeRateEquation} into
\begin{equation}\label{DiscreteQuantumConverted}\begin{aligned}
&\frac{\dot{G}_{k_1}}{(\ln(G_{k_1}))^2}
= \tilde{\mathcal{Q}}[G](k_1)\\
:=&2\sum_{k_1+k_2=k_3}\mathcal{V}_{k_1,k_2,k_3}\frac{\ln(G_{k_3})-\ln(G_{k_2}G_{k_1})}{\ln(G_{k_1})\ln(G_{k_2})\ln(G_{k_3})(G_{k_3}-G_{k_2}G_{k_1})}(G_{k_3}-G_{k_1}G_{k_2})\\
+&\sum_{k_1=k_2+k_3}\mathcal{V}_{k_1,k_2,k_3}\frac{\ln(G_{k_1})-\ln(G_{k_2}G_{k_3})}{\ln(G_{k_1})\ln(G_{k_2})\ln(G_{k_3})(G_{k_1}-G_{k_2}G_{k_3})}(-G_{k_1}+G_{k_2}G_{k_3}), \forall k_1\in\mathbb{I}.
\end{aligned}
\end{equation}
Suppose that  $G$ represents the column vector $(G_1,\dots,G_I)^T$. Let us also denote by  $A_k$, with an abuse of notation, the vector 
$$\left( {\begin{array}{cc}
 0\\       \cdots\\ 1\\ \cdots \\   0  \end{array} } \right),$$
in which the only element that different from $0$ is the $k$-th one.

As discussed about, we convert $F_k$ into $G_k$. This technique allows us to  changes the irreversible system \eqref{BioChemEq1a}-\eqref{BioChemEq1b} into the reversible one $A_{k_1}+A_{k_2}\leftrightarrow A_{k_3}$. Let us now compute the reaction rate functions, which is quite different from mass action kinetics
$$V_{A_{k_1}+A_{k_2} \rightarrow A_{k_3}}(G):=2\mathcal{V}_{k_1,k_2,k_3}\frac{\ln(G_{k_3})-\ln(G_{k_2}G_{k_1})}{\ln(G_{k_1})\ln(G_{k_2})\ln(G_{k_3})(G_{k_3}-G_{k_2}G_{k_1})}{G_{k_1}G_{k_2}},$$
$$V_{A_{k_3} \rightarrow A_{k_1}+A_{k_2}}(G):=2\mathcal{V}_{k_1,k_2,k_3}\frac{\ln(G_{k_3})-\ln(G_{k_2}G_{k_1})}{\ln(G_{k_1})\ln(G_{k_2})\ln(G_{k_3})(G_{k_3}-G_{k_2}G_{k_1})}{G_{k_3}},$$
$$\mathcal{V}_{A_{k_1}+A_{k_2}\leftrightarrow A_{k_3}}:=2\mathcal{V}_{k_1,k_2,k_3}.$$
Otherwise, if $k_1= k_2$ , we write
$$V_{2A_{k_1}\rightarrow A_{k_3}}(G):=\mathcal{V}_{k_1,k_1,k_3}\frac{\ln(G_{k_3})-2\ln(G_{k_1})}{\ln(G_{k_3})\ln(G_{k_1})^2(G_{k_3}-G_{k_1}^2)}{G_{k_1}G_{k_2}},$$
$$V_{A_{k_3} \rightarrow 2A_{k_1}}(G):=\mathcal{V}_{k_1,k_1,k_3}\frac{\ln(G_{k_3})-2\ln(G_{k_1})}{\ln(G_{k_3})\ln(G_{k_1})^2(G_{k_3}-G_{k_1}^2)}{G_{k_3}},$$
$$\mathcal{V}_{2A_{k_1}\leftrightarrow A_{k_3}}:=2\mathcal{V}_{k_1,k_1,k_3}.$$
Using these reaction rate functions,  the   system \eqref{DiscreteQuantumConverted} could be converted into
\begin{eqnarray}\label{EqG}
{\dot{G}}
&=&\mbox{diag}\left( {\begin{array}{cc}
  {(\ln(G_1))^2}\\       \cdots \\   {(\ln(G_I))^2}  \end{array} } \right)\times\\\nonumber
& &\times\sum_{k_1+k_2=k_3}\left[V_{A_{k_1}+A_{k_2} \rightarrow A_{k_3}}(G)-V_{A_{k_3} \rightarrow A_{k_1}+A_{k_2} }(G)\right](A_{k_3}-A_{k_1}-A_{k_2} ).
\end{eqnarray}
Equivalently, we can also write the following equation for the new reversible system  $A_{k_1}+A_{k_2}\leftrightarrow A_{k_3}$
\begin{equation}\label{EqG2}
{\dot{G}}
=\mbox{diag}\left( {\begin{array}{cc}
  {(\ln(G_1))^2}\\       \cdots \\   {(\ln(G_I))^2}  \end{array} } \right)\sum_{y\leftrightarrow y'}\left[V_{y \rightarrow y'}(G)-V_{y' \rightarrow y}(G)\right](y'-y ),
\end{equation}
where $y\leftrightarrow y'$ belongs to the set of reversible reactions
\begin{eqnarray}\label{ChemicalNetworkReversible1a}
A_{k_1}+A_{k_2}&\longleftrightarrow &A_{k_3},
\end{eqnarray}
with $k_1+k_2=k_3$.

Since we have converted \eqref{BioChemEq1a}-\eqref{BioChemEq1b} into the reversible system $A_{k_1}+A_{k_2}\leftrightarrow A_{k_3}$, we can now employ classical techniques for reversible systems, starting with the definitions of the two functionals $\mathcal{R}(G)$ and $\mathcal{S}(G)$ 
\begin{eqnarray}\label{DefineR}\nonumber
&&\mathcal{R}(G)=\\
&=&\mbox{diag}\left( {\begin{array}{cc}
  {(\ln(G_1))^2}\\       \cdots \\   {(\ln(G_I))^2}  \end{array} } \right)\sum_{y \leftrightarrow y'}\left[V_{y \to y'}(G)-V_{y'\to y}(G)\right](y'-y)\\\nonumber
&=&\mbox{diag}\left( {\begin{array}{cc}
  {(\ln(G_1))^2}\\       \cdots \\   {(\ln(G_I))^2}  \end{array} } \right)\sum_{y \leftrightarrow y'}[\mathcal{V}_{y\leftrightarrow y'}G^y -\mathcal{V}_{y\leftrightarrow y'}G^{y'}] \mathcal{H}_{y\leftrightarrow y'}(G)(y'-y),
\end{eqnarray}
and
\begin{eqnarray}\label{DefineRR}\nonumber
\mathcal{S}(G)
&=&\sum_{y \leftrightarrow y'}[\mathcal{V}_{y\leftrightarrow y'}G^y -\mathcal{V}_{y\leftrightarrow y'}G^{y'}]\mathcal{H}_{y\leftrightarrow y'}(G)(y'-y).
\end{eqnarray}
\\ Next, we will follow the techniques introduced in \cite{CraciunFeinberg:2005:MEI} for reversible systems, by computing the Jacobian of $\mathcal{S}$ at the equilibrium point $G^*$, applied to an arbitrary vector $\delta\ne 0$  that belongs to the span of the vectors $y'-y$
\begin{eqnarray}\label{JacR}
\mbox{Jac}(\mathcal{S}(G^*))\delta =\sum_{y \leftrightarrow y'} \mathcal{V}_{y\leftrightarrow y'}(G^*)^y ((y-y')*\delta)\mathcal{H}_{y\leftrightarrow y'}(G^*)(y-y'),
\end{eqnarray}
since $\mathcal{V}_{y\leftrightarrow y'}(G^*)^y -\mathcal{V}_{y\leftrightarrow y'}(G^*)^{y'}=0$
in which the inner product $*$ is defined as
$$y*\delta=\sum_{1}^I \frac{y_k \delta_k}{G_k}.$$
Therefore 
\begin{eqnarray}\label{JacR1}
&& [\mbox{Jac}(\mathcal{S}(G^*))\delta]*\delta =\\\nonumber
& =& \sum_{y \leftrightarrow y'} \mathcal{V}_{y\leftrightarrow y'}(G^*)^y\mathcal{H}_{y\leftrightarrow y'}(G^*)[(y-y')*\delta][(y'-y)*\delta]<0.
\end{eqnarray}
Now, we compute the Jacobian of $\mathcal{R}$ at the equilibrium point $G^*$,
\begin{eqnarray}\nonumber
& &\mbox{Jac}(\mathcal{R}(G^*))\\\nonumber
&=&\mbox{diag}\begin{bmatrix}
\partial_{G_1} {(\ln(G_1^*))^2}\mathcal{S}(G^*)_1\\\cdots     \\    \partial_{G_I} {(\ln(G_I^*))^2}\mathcal{S}(G^*)_I\end{bmatrix}+\mbox{diag}\begin{bmatrix}
  {(\ln(G_I^*))^2}\\\cdots     \\    {(\ln(G_I^*))^2}\end{bmatrix} \mbox{Jac}(\mathcal{S}(G^*))\\\nonumber
&=&\mbox{diag}\begin{bmatrix}
 {(\ln(G_1^*))^2} \\\cdots     \\   {(\ln(G_I^*))^2} \end{bmatrix} \mbox{Jac}(\mathcal{S}(G^*)),
\end{eqnarray}
where the second equality is due to the fact that since $G^*$ is an equilibrium we have that $\mathcal{S}(G^*)=0$.
Since $$\mathfrak{D}:=\mbox{diag}\begin{bmatrix}
  {(\ln(G_1^*))^2} \\\cdots     \\    {(\ln(G_I^*))^2}\end{bmatrix} $$
is a diagonal matrix and $ \mathfrak{J}:=\mbox{Jac}(\mathcal{S}(G^*))$ is negative definite, then $\mathfrak{D}^{1/2}\mathfrak{J}\mathfrak{D}^{1/2}$ is also negative definite with respect to this inner product. Since 
$$\mbox{det}(\mathfrak{D}\mathfrak{J}-\Lambda Id)=\mbox{det}(\mathfrak{D}^{1/2}\mathfrak{J}D^{1/2}-\Lambda Id),~~\forall \Lambda\in\mathbb{R},$$
it follows that $\mathfrak{D}^{1/2}\mathfrak{J}\mathfrak{D}^{1/2}$ and $\mathfrak{D}\mathfrak{J}$ have the same eigenvectors, so $\mathfrak{D}\mathfrak{J}$ is  negative definite. In other words, $\mbox{Jac}(\mathcal{R}(G^*))$ is negative definite.
%(see Proposition 6.1 \cite{Serre:M:2002}).\\
The exponential rate of convergence
$$\max\{|G_1(t)-G_1^*|,\cdots, |G_I(t)-G_I^*|\}\leq C_1 e^{-C_2 t}.
$$
 then follows from the fact that the Jacobian above is negative definite. This  leads to the conclusion of the theorem.
\end{proof}

\begin{remark}
In the proof above we could have used the Lyapunov function 
\begin{equation}
L(F)=-\Pi_{k=1}^I F_k,
\end{equation}
and all the computations remain the same.
\end{remark}

\bigskip

\section{A reaction network approach to the near resonance equation}\label{NRSec:Q12}
\subsection{The dynamical system associated to the near resonance equation}
Similar as in Section  \ref{Sec:Q12}, let us consider the discrete version of  \eqref{NRQuantumBoltzmann1}, which is described below.
\\\\ Let $\mathcal{L}_R$ denote the lattice of integer points 
$$\mathcal{L}_R=\{p\in\mathbb{Z}^3, |p|<R\}.$$
The discrete version of the near-resonance  equation  \eqref{NRQuantumBoltzmann1} reads 
\begin{equation}\label{NRDiscreteQuantum}\begin{aligned}
\dot{f}_{p_1}=
&\sum_{\substack{p_2,p_3\in\mathcal{L}_R,\\ p_1-p_2-p_3=0, \\ |\omega(p_1)-\omega(p_2)-\omega(p_3)|<\Lambda}}\mathcal{V}_{p_1,p_2,p_3}\left[f_{p_2}f_{p_3}-f_{p_1}(f_{p_2}+f_{p_3})\right]\\
&\quad -2\sum_{\substack{p_2,p_3\in\mathcal{L}_R,\\ p_1+p_2-p_3=0,\\ |\omega(p_1)+\omega(p_2)-\omega(p_3)|<\Lambda}}\mathcal{V}_{p_1,p_2,p_3}\left[f_{p_1}f_{p_2}-f_{p_3}(f_{p_1}+f_{p_2})\right],~~\end{aligned}
\end{equation}
for all $p_1$ in $\mathcal{L}_R$.

Since $\mathcal{L}_R$ contains a finite number of grid points, then there exists a positive number $\Lambda_*$ satisfying
\begin{equation}
\label{LStar}
\Lambda_*=\min_{\substack{p_1,p_2,p_3\in\mathcal{L}_R,\\ p_1-p_2-p_3 = 0, \\ |\omega(p_1)-\omega(p_2)-\omega(p_3)| \neq 0}}|\omega(p_1)-\omega(p_2)-\omega(p_3)|
\end{equation}
See Figure 1 for a geometric illustration of the construction of $\Lambda_*$. 
The red triangle in Figure 1 shows an example of a triple $(p_1, p_2, p_3)$ such that we have $p_1\!-\!p_2\!-\!p_3\!=\!0$ but $|\omega(p_1)-\omega(p_2)-\omega(p_3)| \neq 0$.
Moreover, we also define
\begin{equation}
\label{LStar2}
\Lambda^*=\min_{\substack{p\in\mathcal{L}_R} \setminus \{0\} } | p |.
\end{equation}

\subsection{Convergence to equilibrium}
%\begin{theorem}\label{TheoremQ12} For any initial condition, the quantum Boltzmann equation \eqref{DiscreteQuantum} convergences to an equilbrium state
%\begin{equation}
%G^*=\begin{bmatrix}
%G^*_1 \\\cdots    \\ G^*_k\\\cdots\\    G^*_I \end{bmatrix}=\begin{bmatrix}
%\frac{1}{e^{\rho}} \\\cdots    \\ \frac{1}{e^{\rho k}} \\\cdots\\    \frac{1}{e^{\rho I}} \end{bmatrix},
%\end{equation}
%where $\rho$ is some positive constant depending. If the solution is given by then $G(t)$
%\begin{equation}
%G(t)=\begin{bmatrix}
%G_1(t) \\\cdots    \\ G_k(t)\\\cdots\\    G_I(t) \end{bmatrix},
%\end{equation}
%converges to $G^*$ exponentially as $t$ tends to infinity in the following sense:
%\\ There exist positive constants $C_1$ and $C_2$, such that
%\begin{equation}\label{ExponentialDecay}
%\max\{|G_1(t)-G_1^*|,\cdots, |G_I(t)-G_I^*|\}\leq C_1 e^{-C_2 t},
%\end{equation}
%which implies that 
%there exist positive constants $C_3$ and $C_4$ satisfying
%\begin{equation}\label{ExponentialDecay2}
%\max\{|F_1(t)-F_1^*|,\cdots, |F_I(t)-F_I^*|\}\leq C_3 e^{-C_4 t},
%\end{equation}
%for $$F_k=\frac{G_k}{1-G_k},$$ and $$F_k^*=\frac{G_k^*}{1-G_k^*},$$
%$k=1,\cdots,I$.
%\end{theorem}

\begin{theorem}\label{NRTheoremQ12}
The solution of the discrete near resonance equation \eqref{NRDiscreteQuantum} has different behaviors for different values of the resonance broadening frequency $\Lambda$.

\begin{itemize}
\item If $\Lambda<\Lambda_*$, where $\Lambda_*$ is defined in \eqref{LStar}: For any positive initial condition, the solution $$f(t)=(f_p(t))_{p\in\mathcal{L}_R}$$ of the discrete wave turbulence equation \eqref{NRDiscreteQuantum} converges to an equilibrium state $f^*=(f_p^*)_{p\in\mathcal{L}_R}$. For each ray $\{kP_0\}_{k\geq 1}$ there exists a positive constant $\rho({P_0})$ such that if $p=kP_0$ then $$f^*_p=\frac{1}{{k\rho({P_0})}}.$$ Moreover, the solution $f(t)$ of \eqref{NRDiscreteQuantum} converges to $f^*$ exponentially fast, in the following sense: there exist positive constants $C_1$, $C_2$ such that 
\begin{equation}\label{Rate1}
\max_{p\in\mathcal{L}_R}|f_p(t)-f_p^*|<C_1e^{-C_2t}.
\end{equation}

The reason for \eqref{Rate1} to hold is that when $\Lambda<\Lambda_*$, the resonance broadening equation becomes the exact resonance one.
\item If $2\Lambda^*\le \Lambda$, where $\Lambda^*$ is defined in \eqref{LStar2}: For any positive initial condition, the solution $$f(t)=(f_p(t))_{p\in\mathcal{L}_R}$$ of the discrete wave turbulence equation \eqref{NRDiscreteQuantum} exits any compact set $K\subset (0,\infty)^{|\mathcal{L}_R|}$ as $t$ tends to infinity, and moreover we have $\displaystyle \liminf_{t \to \infty} || f(t) || = \infty.$
\end{itemize}

\end{theorem}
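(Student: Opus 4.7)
For the first assertion ($\Lambda<\Lambda_*$), the claim reduces immediately to Theorem~\ref{TheoremQ12}: by the definition~\eqref{LStar} of $\Lambda_*$, any triple $(p_1,p_2,p_3)\in\mathcal{L}_R^3$ with $p_1-p_2-p_3=0$ and $|\omega(p_1)-\omega(p_2)-\omega(p_3)|<\Lambda<\Lambda_*$ is forced to be exactly resonant, and analogously for the second sum in~\eqref{NRDiscreteQuantum}. Hence the near-resonance dynamics coincides literally with the exact-resonance dynamics, and~\eqref{Rate1} is the conclusion of Theorem~\ref{TheoremQ12}.

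For the second assertion ($\Lambda\ge 2\Lambda^*$), the plan is to reuse the Lyapunov function
\[
L(f) = -\sum_{p\in\mathcal{L}_R\setminus\{0\}}\log f_p
\]
from Theorem~\ref{TheoremQ12} and argue that it admits \emph{no interior critical point}; this nonexistence will then force the trajectory to leave every compact subset of $(0,\infty)^{|\mathcal{L}_R|}$. Repeating the Step~1 computation of the proof of Theorem~\ref{TheoremQ12} verbatim, but summing over all near-resonant triples, yields
\[
\dot L = -\sum_{\substack{p_1=p_2+p_3 \\ |\omega(p_1)-\omega(p_2)-\omega(p_3)|<\Lambda}}\mathcal{V}_{p_1,p_2,p_3}\,f_{p_1}f_{p_2}f_{p_3}\left(\frac{1}{f_{p_1}}-\frac{1}{f_{p_2}}-\frac{1}{f_{p_3}}\right)^2 \le 0,
\]
with equality iff, setting $g(p):=1/f_p$, the additivity $g(p_1)=g(p_2)+g(p_3)$ holds across every near-resonant triple with $p_1=p_2+p_3$.

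The key combinatorial step is that once $\Lambda\ge 2\Lambda^*$, this additivity forces $g$ to be linear on $\mathcal{L}_R\setminus\{0\}$. The main ingredient is the elementary bound: for every $q\in\mathcal{L}_R$ with $|q|=\Lambda^*$ and every $p\in\mathcal{L}_R$ with $p,p-q\in\mathcal{L}_R\setminus\{0\}$, the triple $(p,p-q,q)$ is near-resonant since $|p-q|+|q|-|p|\le 2|q|=2\Lambda^*\le\Lambda$ by the triangle inequality. Letting $q$ range over minimal-length basis vectors $\pm e_1,\pm e_2,\pm e_3$ and iterating $g(p)=g(p-q)+g(q)$---together with $g(-e_i)=-g(e_i)$ obtained from the near-resonant triple $(e_i,2e_i,-e_i)$ and $g(2e_i)=2g(e_i)$ from the exact-resonant triple $(2e_i,e_i,e_i)$---I would prove by induction on $|p_x|+|p_y|+|p_z|$ that $g(p)=\vec a\cdot p$ for $\vec a=(g(e_1),g(e_2),g(e_3))$. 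Since $\mathcal{L}_R$ contains both $p$ and $-p$, the positivity requirement $g>0$ on $\mathcal{L}_R\setminus\{0\}$ contradicts such a linear $g$, so no positive equilibrium exists.

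Finally, to upgrade this to $\liminf_{t\to\infty}\|f(t)\|=\infty$, I would argue by contradiction: if some sequence $t_n\to\infty$ satisfies $\|f(t_n)\|\le M$, monotonicity of $L$ gives $L(f(t_n))\le L(f(0))$, which together with $f_p(t_n)\le M$ yields a uniform positive lower bound on each $f_p(t_n)$; hence $\{f(t_n)\}$ is relatively compact in $(0,\infty)^{|\mathcal{L}_R|}$ and admits a subsequential limit $f^*$ with every component positive. Monotonicity of $L$ combined with continuous dependence on initial conditions then forces $L$ to be constant on the forward trajectory through $f^*$, so that $f^*$ is a critical point, contradicting the nonexistence of interior equilibria established above. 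The step I expect to be the most delicate is the combinatorial additivity argument, particularly verifying that the inductive chain of triples $(p,p-q,q)$ stays inside $\mathcal{L}_R$ and avoids the origin for every $p$---including those near the boundary $|p|\approx R$---which will require choosing $q$ adaptively based on the position of $p$.
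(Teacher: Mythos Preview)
Your overall strategy matches the paper's: both parts use the same Lyapunov function $L(f)=-\sum_p\log f_p$, reduce Case~1 to Theorem~\ref{TheoremQ12} by observing that $\Lambda<\Lambda_*$ forces exact resonance, and handle Case~2 by showing that the additivity condition $1/f_{p_1}=1/f_{p_2}+1/f_{p_3}$ across all near-resonant triples has no positive solution.

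Where you diverge from the paper is in the execution of the nonexistence step. The paper does not attempt to prove that $g(p)=1/f_p$ is linear; it simply exhibits one contradictory triple. Along the ray through a minimal vector $e_p$ (with $|e_p|=\Lambda^*$), exact resonance forces $g(ke_p)=\lambda k$ and $g(-ke_p)=\lambda' k$ for some $\lambda,\lambda'>0$; then the near-resonant triple $(3e_p,4e_p,-e_p)$ (which satisfies $||4e_p|+|-e_p|-|3e_p||=2\Lambda^*\le\Lambda$) gives $3\lambda=4\lambda+\lambda'$, impossible for positive $\lambda,\lambda'$. Your own argument actually contains an even shorter version of this: the identity $g(-e_i)=-g(e_i)$ that you derive from the triple $(e_i,2e_i,-e_i)$ is \emph{already} a contradiction with $g>0$, so the subsequent induction towards full linearity $g(p)=\vec a\cdot p$ is unnecessary, and your flagged concern about keeping the inductive chain inside $\mathcal{L}_R$ near the boundary becomes moot. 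For the final $\liminf$ conclusion, the paper argues geometrically that $\prod_p f_p(t)$ is nondecreasing, so the trajectory is trapped in a super-level set of the Lyapunov function and can only exit compacts through $\|f\|\to\infty$; your $\omega$-limit/LaSalle argument reaches the same conclusion and is equally valid.
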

\begin{proof}
\bigskip

{\it Step 1: The Lyapunov function.}
We define the  function
\begin{equation}\label{NRLyapunov}
L(f)=-\sum_{p\in \mathcal{L}_R}^I \log (f_{p}),
\end{equation}
and we will show that $L$ is a Lyapunov function for  \eqref{NRDiscreteQuantum}.
Similar as the proof of Theorem \ref{NRTheoremQ12}, we also have
\begin{equation}\label{LyapunovInequality}
\begin{aligned}
& \ \dot{f}\cdot\nabla L = \\
 =  &- \sum_{p_2+p_3=p_1, ||p_2|+|p_3|-|p_1||\le \Lambda}\mathcal{V}_{p_1,p_2,p_3}f_{p_1}f_{p_2}f_{p_3}\left[\frac{1}{f_{p_1}}-\frac{1}{f_{p_2}}-\frac{1}{f_{p_3}}\right]^2\\
 \leq  & \  0,
\end{aligned}
\end{equation}
Note that the above inequality is strict unless
\begin{equation}\label{NRC12equilibrium}
\frac{1}{f_{p_1}}
=\frac{1}{f_{p_2}}+\frac{1}{f_{p_3}},\end{equation}
for all  $p_1=p_2+p_3$ and $||p_2|+|p_3|-|p_1||\le \Lambda$.\\

{\it Step 2: The two cases of $\Lambda$.}
% Let us consider 2 cases.\\

{\bf Case 1: $\Lambda<\Lambda_*$}.

Since $\Lambda<\Lambda_*$, the system  $p_1=p_2+p_3$ and $||p_2|+|p_3|-|p_1||\le \Lambda<\Lambda_*$ becomes exactly  the system  $p_1=p_2+p_3$ and $||p_2|+|p_3|-|p_1||=0.$
Therefore, equation \eqref{NRC12equilibrium}  implies that at equilibrium   $f_k^*=\rho k$, for some positive constant $\rho$. 
By the conservation relation
$$\sum_{p\in \mathcal{L}_R}p{f_p}=\sum_{p\in \mathcal{L}_R}p{f_p^*},$$
we deduce that $\rho$ is unique and $f^*$ is the only equilibrium point that satisfies the same conservation relation as the initial condition. The same argument as in Theorem \ref{TheoremQ12} can be applied and the conclusion of Theorem \ref{NRTheoremQ12} follows.

\begin{figure}[!h] \begin{center}
  \includegraphics[width=0.5\textwidth, height=0.3\textheight]{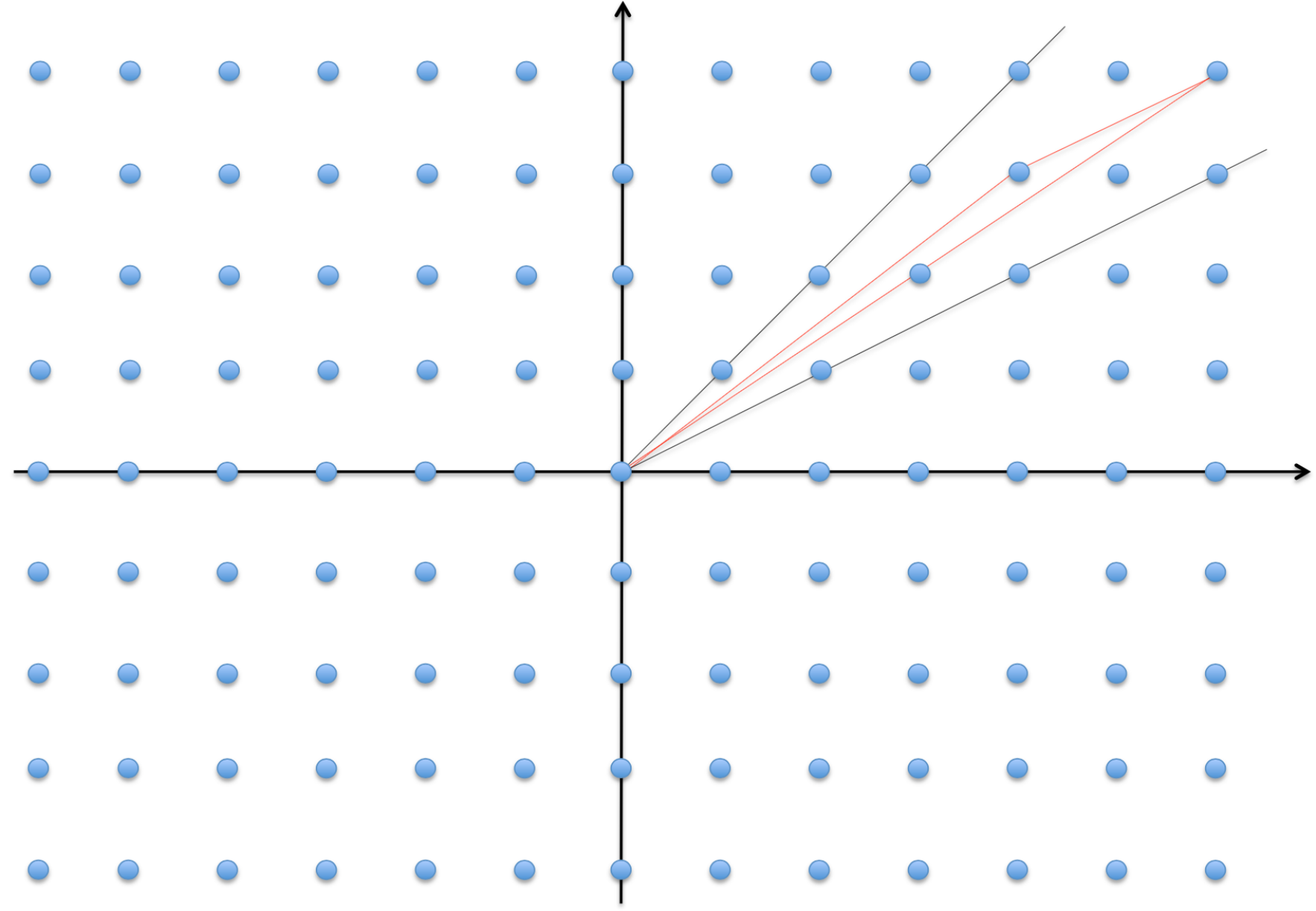}
\caption{The system of equations \eqref{DiscreteQuantum} can be decoupled alongs rays, as shown in \eqref{DiscreteQuantum1DVersion2}. Two examples of such rays are shown in black above. The red triangle illustrates the definition of $\Lambda_*$ in (\ref{LStar}). }
\label{Fig2}\end{center}
\end{figure}

{\bf Case 2: $2\Lambda^*\le \Lambda$}.

Define $Z_p=\frac{1}{F_p}$. The equilibrium set of \eqref{LyapunovInequality} satisfies the following system of linear equations. 

\begin{equation}\label{EquilibriumSystem}
{Z_{p_1}}
={Z_{p_2}}+{Z_{p_3}},\end{equation}
for all  $p_1=p_2+p_3$ and $||p_2|+|p_3|-|p_1||\le \Lambda$.
We will show that the system \eqref{EquilibriumSystem} has no solution by contradiction.

The system \eqref{EquilibriumSystem} contains the exact resonance one as a subsystem 
\begin{equation}\label{ExactEquilibriumSystem}
{Z_{p_1}}
={Z_{p_2}}+{Z_{p_3}},\end{equation}
for all  $p_1=p_2+p_3$ and $||p_2|+|p_3|-|p_1||= 0$. 

For each ray $p$, we denote by $e_p$  the closet point of $\mathcal{L}^R$ to the origin on this ray. Suppose that $p$ is chosen such that $|e_p|=\Lambda^*$.  Let us consider all of the points of $\mathcal{L}^R$, which is a set of the form $\{e_p, 2e_q, 3e_q,\cdots, Je_q\}.$ On this set, the exact resonance equation  \eqref{ExactEquilibriumSystem} is the classical one
$$Z_{(i+j)e_p}=Z_{ie_p}+Z_{je_p}$$
and has a unique solution
$$Z_p=\lambda|p|,$$
where $\lambda$ is a positive constant. 

By a similar argument, let us consider the ray $-e_p$; then $-e_p$ is also the closest point of $\mathcal{L}^R$ to the origin on this ray. Thus, we know
$$Z_q=\lambda'|q|,$$
where $\lambda'$ is a positive constant, for all $q$ on the ray of  $-e_p$.

Consider the vectors $p_1=3e_p$, $p_2=4e_p$ and $p_3=-e_p$, then $p_1=p_2+p_3$ and $||p_2|+|p_3|-|p_1||=2\Lambda_*\le \Lambda$. Therefore for these choices
$$Z_{p_1}=Z_{p_2}+Z_{p_3},$$
which is equivalent to
 $$\lambda3|e_p|=\lambda4|e_p|+\lambda'|e_p|.$$
 This leads to a contradiction.

Since the system \eqref{EquilibriumSystem} has no solution, for any compact set $K\subset (0,\infty)^{|\mathcal{L}_R|}$,
  there exists $\Delta_K >0$ such that 
\begin{equation}\label{LyapunovInequality2}
\begin{aligned}
& \ \dot{f}\cdot\nabla L = \\
 =  &- \sum_{p_2+p_3=p_1, ||p_2|+|p_3|-|p_1||\le \Lambda}\mathcal{V}_{p_1,p_2,p_3}f_{p_1}f_{p_2}f_{p_3}\left[\frac{1}{f_{p_1}}-\frac{1}{f_{p_2}}-\frac{1}{f_{p_3}}\right]^2\\
 \leq  & \  -\Delta_K.
\end{aligned}
\end{equation}
Therefore $f$ exits any compact set $K\subset (0,\infty)^{|\mathcal{L}_R|}$ as $t$ tends to infinity.
In particular, consider the compact sets of the form 
\begin{equation} \label{compact_special}
\{ x \in (0,\infty)^{|\mathcal{L}_R|} \  | \   \prod_{p\in \mathcal{L}_R}  x_{p} \ge \eps_0 \textrm{ and } || x || \le M\}.
\end{equation}
Note that the set 
\begin{equation} \label{level_set}
\{ x \in (0,\infty)^{|\mathcal{L}_R|} \  | \   \prod_{p\in \mathcal{L}_R}  x_{p} = \eps_0 \}
\end{equation}
is a {\em level set} of the Lyapunov function $L(f)$ (see also Remark \ref{Lyap_prod} below).

Consider now a solution $f(t)$ of equation \eqref{NRDiscreteQuantum} and some $\eps_0 > 0$ such that for $f^0=f(0)$ we have $\prod_{p\in \mathcal{L}_R}  f^0_{p} > \eps_0$.
%If we fix $\eps_0$ such that the inequality involving $\eps_0$ is satisfied by $x=f(0)$ 

Then, since $f(t)$ must exit the compact set given by (\ref{compact_special}) for any $M>0$, and the Lyapunov function property prevents it from exiting through the boundary of the form (\ref{level_set}), it follows that $\displaystyle \liminf_{t \to \infty} || f(t) || = \infty$.
\end{proof}

\bigskip

\begin{remark} \label{Lyap_prod}
In the proof of Theorem \ref{NRTheoremQ12} we could also have used the Lyapunov function 
\begin{equation}
L(f) = - \prod_{p\in \mathcal{L}_R}  f_{p},
\end{equation}
and all the computations remain the same.
\end{remark}

\section{Perspectives}

The  explicit expression for the Kolmogorov-type spectrum
of acoustic turbulence of equation \eqref{QuantumBoltzmann1} is (cf. \cite{l1997statistical})
$$f(p)=|p|^{-3/2}.$$
An open question is if this is also a spectrum of the discrete system. In our work, we prove that $|p|^{-1}$
is a global attractor for the discrete system. In a future work, we will try to compare these two solutions. We suspect that it may be not possible to find a Lyapunov function for $|p|^{-3/2}$ despite the fact that it is possible to find a Lyapunov function for $|p|^{-1}$, as shown above. Moreover, we will also plan to study the behavior of the system when $\Lambda_*\le \Lambda\le 2\Lambda^*$.  

~~ \\{\bf Acknowledgements.} G. Craciun was supported by NSF grants DMS-1412643 and DMS-1816238. M.-B. Tran was partially supported by NSF Grant DMS-1814149, NSF Grant DMS-1854453, SMU URC Grant 2020 and Linking Fellowship of Dedman College of Humanities and Sciences.

\def\cprime{$'$} \def\cprime{$'$} \def\cprime{$'$} \def\cprime{$'$}
  \def\cprime{$'$} \def\cprime{$'$} \def\cprime{$'$} \def\cprime{$'$}


\begin{thebibliography}{10}

\bibitem{AlonsoGambaBinh}
R.~Alonso, I.~M. Gamba, and M.-B. Tran.
\newblock The Cauchy problem and BEC stability for the quantum
  Boltzmann-Condensation system for bosons at very low temperature.
\newblock {\em arXiv preprint arXiv:1609.07467}, 2018.

\bibitem{Anderson:2001:APG}
D.~F. Anderson.
\newblock A proof of the global attractor conjecture in the single linkage
  class case.
\newblock {\em SIAM J. Appl. Math.}, 71(4):1487--1508, 2011.

\bibitem{MR2734052}
D.~F. Anderson, G. Craciun, and T.~G. Kurtz.
\newblock Product-form stationary distributions for deficiency zero chemical
  reaction networks.
\newblock {\em Bull. Math. Biol.}, 72(8):1947--1970, 2010.

\bibitem{AngeliLeenheerSontag:2011:PRF}
D. Angeli, P.De~Leenheer, and E.~D. Sontag.
\newblock Persistence results for chemical reaction networks with
  time-dependent kinetics and no global conservation laws.
\newblock {\em SIAM J. Appl. Math.}, 71(1):128--146, 2011.

\bibitem{MR2604624}
M. and G. Craciun.
\newblock Graph-theoretic approaches to injectivity and multiple equilibria in
  systems of interacting elements.
\newblock {\em Commun. Math. Sci.}, 7(4):867--900, 2009.

\bibitem{bartello1995geostrophic}
P.~Bartello.
\newblock Geostrophic adjustment and inverse cascades in rotating stratified
  turbulence.
\newblock {\em Journal of the atmospheric sciences}, 52(24):4410--4428, 1995.

\bibitem{benney1967propagation}
D.~J. Benney and A.~C. Newell.
\newblock The propagation of nonlinear wave envelopes.
\newblock {\em Studies in Applied Mathematics}, 46(1-4):133--139, 1967.

\bibitem{benney1966nonlinear}
D.~J. Benney and P.~G. Saffman.
\newblock Nonlinear interactions of random waves in a dispersive medium.
\newblock {\em Proc. R. Soc. Lond. A}, 289(1418):301--320, 1966.

\bibitem{boldyrev2012residual}
S.~Boldyrev, J.~C. Perez, and V.~Zhdankin.
\newblock Residual energy in mhd turbulence and in the solar wind.
\newblock In {\em AIP Conference Proceedings}, volume 1436, pages 18--23. AIP,
  2012.

\bibitem{Boltzmann}
L.~Boltzmann.
\newblock Neuer {B}eweis zweier {S}atze uber das {W}armegleichgewicht unter
  mehratomigen {G}as-molekulen.
\newblock {\em Sitzungsberichte der Kaiserlichen Akademie der Wissenschaften in
  Wien.}, (95):153--164, 1887.

\bibitem{buckmaster2016analysis}
T.~Buckmaster, P.~Germain, Z.~Hani, and J.~Shatah.
\newblock Analysis of the {(CR)} equation in higher dimensions.
\newblock {\em arXiv preprint arXiv:1610.05736}, 2016.

\bibitem{buckmaster2016effective}
T.~Buckmaster, P.~Germain, Z.~Hani, and J.~Shatah.
\newblock Effective dynamics of the nonlinear schr\"odinger equation on large
  domains.
\newblock {\em arXiv preprint arXiv:1610.03824}, 2016.

\bibitem{connaughton2001discreteness}
C.~Connaughton, S.~Nazarenko, and A.~Pushkarev.
\newblock Discreteness and quasiresonances in weak turbulence of capillary
  waves.
\newblock {\em Physical Review E}, 63(4):046306, 2001.

\bibitem{CraciunBinh}
G.~Craciun and M.-B. Tran.
\newblock A reaction network approach to the convergence to equilibrium of
  quantum Boltzmann equations for Bose gases.
\newblock {\em arXiv preprint arXiv:1608.05438}, 2017.

\bibitem{Craciun:2015:TDI}
G. Craciun.
\newblock Toric {D}ifferential {I}nclusions and a proof of the {G}lobal
  {A}ttractor {C}onjecture.
\newblock {\em Submitted}.

\bibitem{MR2561288}
G. Craciun, A. Dickenstein, A. Shiu, and B. Sturmfels.
\newblock Toric dynamical systems.
\newblock {\em J. Symbolic Comput.}, 44(11):1551--1565, 2009.

\bibitem{CraciunFeinberg:2005:MEI}
G. Craciun and M. Feinberg.
\newblock Multiple equilibria in complex chemical reaction networks. {I}. {T}he
  injectivity property.
\newblock {\em SIAM J. Appl. Math.}, 65(5):1526--1546 (electronic), 2005.

\bibitem{CraciunNazarovPantea:2013:PAP}
G. Craciun, F. Nazarov, and C. Pantea.
\newblock Persistence and permanence of mass-action and power-law dynamical
  systems.
\newblock {\em SIAM J. Appl. Math.}, 73(1):305--329, 2013.

\bibitem{EscobedoBinh}
M.~Escobedo and M.-B. Tran.
\newblock Convergence to equilibrium of a linearized quantum {B}oltzmann
  equation for bosons at very low temperature.
\newblock {\em Kinetic and Related Models}, 8(3):493--531, 2015.

\bibitem{faou2016weakly}
E.~Faou, P.~Germain, and Z.~Hani.
\newblock The weakly nonlinear large-box limit of the 2d cubic nonlinear
  schr{\''o}dinger equation.
\newblock {\em Journal of the American Mathematical Society}, 29(4):915--982,
  2016.

\bibitem{Feinberglecture}
M.~Feinberg.
\newblock Lectures on chemical reaction networks.
\newblock {\em written version of lectures given at the Mathematical Research
  Center, University of Wisconsin, Madison WI, 1979. Available at
  http://www.crnt.osu.edu/LecturesOnReactionNetworks}.

\bibitem{Feinberg:1972:CBI}
M. Feinberg.
\newblock Complex balancing in general kinetic systems.
\newblock {\em Arch. Rational Mech. Anal.}, 49:187--194, 1972/73.

\bibitem{Feinberg:1995:TEA}
M. Feinberg.
\newblock The existence and uniqueness of steady states for a class of chemical
  reaction networks.
\newblock {\em Arch. Rational Mech. Anal.}, 132(4):311--370, 1995.

\bibitem{GambaSmithBinh}
I.~M. Gamba, L.~M. Smith, and M.-B. Tran.
\newblock On the wave turbulence theory for stratified flows in the ocean.
\newblock {\em  M3AS: Mathematical Models and Methods in Applied Sciences}, Vol. 30, No. 1 105-137 (2020).

\bibitem{germain2016continuous}
P.~Germain, Z.~Hani, and L.~Thomann.
\newblock On the continuous resonant equation for {NLS. I. D}eterministic
  analysis.
\newblock {\em Journal de Math{\'e}matiques Pures et Appliqu{\'e}es},
  105(1):131--163, 2016.

\bibitem{germain2017optimal}
P.~Germain, A.~D. Ionescu, and M.-B. Tran.
\newblock Optimal local well-posedness theory for the kinetic wave equation.
\newblock {\em Journal of Functional Analysis}, accepted, 2020.

\bibitem{gill1967resonant}
A.~E. Gill.
\newblock Resonant interactions between planetary waves.
\newblock {\em Proc. R. Soc. Lond. A}, 299(1456):120--144, 1967.

\bibitem{MR3199409}
M. Gopalkrishnan, E. Miller, and A. Shiu.
\newblock A geometric approach to the global attractor conjecture.
\newblock {\em SIAM J. Appl. Dyn. Syst.}, 13(2):758--797, 2014.

\bibitem{greenspan1969non}
H.~P. Greenspan.
\newblock On the non-linear interaction of inertial modes.
\newblock {\em Journal of Fluid Mechanics}, 36(2):257--264, 1969.

\bibitem{Gunawardena}
J.~Gunawardena.
\newblock Chemical reaction network theory for in-silico biologists.
\newblock {\em Lecture notes available online at http://vcp.med.harvard.edu/
  papers.html, 2003.}

\bibitem{Hasselmann:1962:OTN1}
K.~Hasselmann.
\newblock On the non-linear energy transfer in a gravity-wave spectrum. {I}.
  {G}eneral theory.
\newblock {\em J. Fluid Mech.}, 12:481--500, 1962.

\bibitem{Horn:1972:NAS}
F.~Horn.
\newblock Necessary and sufficient conditions for complex balancing in chemical
  kinetics.
\newblock {\em Arch. Rational Mech. Anal.}, 49:172--186, 1972/73.

\bibitem{Horn:1974:TDO}
F.~Horn.
\newblock The dynamics of open reaction systems.
\newblock In {\em Mathematical aspects of chemical and biochemical problems and
  quantum chemistry ({P}roc. {SIAM}-{AMS} {S}ympos. {A}ppl. {M}ath., {N}ew
  {Y}ork, 1974)}, pages 125--137. SIAM--AMS Proceedings, Vol. VIII. Amer. Math.
  Soc., Providence, R.I., 1974.

\bibitem{HornJackson:1972:GMA}
F.~Horn and R.~Jackson.
\newblock General mass action kinetics.
\newblock {\em Arch. Rational Mech. Anal.}, 47:81--116, 1972.

\bibitem{huang2000}
H.-P. Huang, B.~Galperin, and S.~Sukoriansky.
\newblock Anisotropic spectra in two-dimensional turbulence on the surface of a
  sphere.
\newblock {\em Phys. Fluids}, 13:225--240, 2000.

\bibitem{JinBinh}
S.~Jin and M.-B. Tran.
\newblock Quantum hydrodynamic approximations to the finite temperature trapped
  bose gases.
\newblock {\em Physica D: Nonlinear Phenomena}, 380:45--57, 2018.

\bibitem{kadomtsev1988collective}
B.~B. Kadomtsev.
\newblock Collective phenomena in plasma. 2, nauka; moscow (ussr).
\newblock 1988.

\bibitem{lee2007formation}
Y.~Lee and L.~M. Smith.
\newblock On the formation of geophysical and planetary zonal flows by
  near-resonant wave interactions.
\newblock {\em J. Fluid Mech.}, 576:405--424, 2007.

\bibitem{l1997statistical}
V.~S. Lvov, Y.~Lvov, A.~C. Newell, and V.~Zakharov.
\newblock Statistical description of acoustic turbulence.
\newblock {\em Physical Review E}, 56(1):390, 1997.

\bibitem{lvov2012resonant}
Y.~V. Lvov, K.~L. Polzin, and N.~Yokoyama.
\newblock Resonant and near-resonant internal wave interactions.
\newblock {\em Journal of Physical Oceanography}, 42(5):669--691, 2012.

\bibitem{newell1969rossby}
A.~C. Newell.
\newblock Rossby wave packet interactions.
\newblock {\em Journal of Fluid Mechanics}, 35(2):255--271, 1969.

\bibitem{newell2011wave}
A.~C. Newell and B.~Rumpf.
\newblock Wave turbulence.
\newblock {\em Annual review of fluid mechanics}, 43:59--78, 2011.

\bibitem{nguyen2017quantum}
T.~T. Nguyen and M.-B. Tran.
\newblock On the kinetic equation in {Z}akharov's wave turbulence theory for
  capillary waves.
\newblock {\em SIAM Journal on Mathematical Analysis}, 50(2):2020--2047, 2018.

\bibitem{ToanBinh}
T.~T. Nguyen and M.-B. Tran.
\newblock Uniform in time lower bound for solutions to a quantum boltzmann
  equation of bosons.
\newblock {\em Archive for Rational Mechanics and Analysis}, 231:63--89, 2019.

\bibitem{tran2019boltzmann}
M.-B. Tran and Y.~Pomeau.
\newblock Boltzmann-type collision operators for {B}ogoliubov excitations of
  {B}ose-{E}instein condensates: {A} unified framework.
\newblock {\em Physical Review E}, 101, 032119, 2020.

\bibitem{Peierls:1993:BRK}
R.~Peierls.
\newblock Zur kinetischen theorie der warmeleitung in kristallen.
\newblock {\em Annalen der Physik}, 395(8):1055--1101, 1929.

\bibitem{Peierls:1960:QTS}
R.~E. Peierls.
\newblock Quantum theory of solids.
\newblock In {\em Theoretical physics in the twentieth century ({P}auli
  memorial volume)}, pages 140--160. Interscience, New York, 1960.

\bibitem{phillips1968interaction}
O.~M. Phillips.
\newblock The interaction trapping of internal gravity waves.
\newblock {\em Journal of Fluid Mechanics}, 34(2):407--416, 1968.

\bibitem{PomeauBinh}
Y.~Pomeau and M.-B. Tran.
\newblock Statistical physics of non equilibrium quantum phenomena.
\newblock {\em Lecture Notes in Physics, Volume 967, Springer Nature}, 2019.

\bibitem{ReichlTran}
L.~E. Reichl and M.-B. Tran.
\newblock A kinetic model for very low temperature dilute bose gases.
\newblock {\em Journal of Physics A:
  Mathematical and Theoretical}, Volume 52, Number 6, 063001 (2019).

\bibitem{saint2004kinetic}
L.~Saint-Raymond.
\newblock Kinetic models for superfluids: a review of mathematical results.
\newblock {\em Comptes Rendus Physique}, 5(1):65--75, 2004.

\bibitem{smith2001}
L.~M. Smith.
\newblock Numerical study of two-dimensional stratified turbulence.
\newblock {\em Contemporary Mathematics: Advances in Wave Interaction and
  Turbulence}, pages 91--106, 2001.

\bibitem{smith2005near}
L.~M. Smith and Y.~Lee.
\newblock On near resonances and symmetry breaking in forced rotating flows at
  moderate rossby number.
\newblock {\em J. Fluid Mech.}, 535:111--142, 2005.

\bibitem{smith1999transfer}
L.~M. Smith and F.~Waleffe.
\newblock Transfer of energy to two-dimensional large scales in forced,
  rotating three-dimensional turbulence.
\newblock {\em Physics of Fluids}, 11(6):1608--1622, 1999.

\bibitem{smith2002generation}
L.~M. Smith and F.~Waleffe.
\newblock Generation of slow large scales in forced rotating stratified
  turbulence.
\newblock {\em J. Fluid Mech.}, 451:145--168, 2002.

\bibitem{SofferBinh2}
A.~Soffer and M.-B. Tran.
\newblock On coupling kinetic and schrodinger equations.
\newblock {\em Journal of Differential Equations}, 265(5):2243--2279, 2018.

\bibitem{SofferBinh1}
A.~Soffer and M.-B. Tran.
\newblock On the dynamics of finite temperature trapped bose gases.
\newblock {\em Advances in Mathematics}, 325:533--607, 2018.

\bibitem{SofferBinh3}
A.~Soffer and M.-B. Tran.
\newblock On the energy cascade of 3-wave kinetic equations: Beyond Kolmogorov-Zakharov solutions.
\newblock {\em Communications in Mathematical Physics}, Published Online 19 December 2019. DOI https://doi.org/10.1007/s00220-019-03651-w

\bibitem{warn1986statistical}
T.~Warn.
\newblock Statistical mechanical equilibria of the shallow water equations.
\newblock {\em Tellus A: Dynamic Meteorology and Oceanography}, 38(1):1--11,
  1986.

\bibitem{Yu_Craciun_survey}
P.~Y. Yu and G. Craciun.
\newblock Mathematical analysis of chemical reaction systems.
\newblock {\em Israel Journal of Chemistry}, 58(6-7):733--741, 2018.

\bibitem{zakharov1965weak}
V.~E. Zakharov.
\newblock Weak turbulence in media with a decay spectrum.
\newblock {\em Journal of Applied Mechanics and Technical Physics},
  6(4):22--24, 1965.

\bibitem{zakharov1967instability}
V.~E. Zakharov.
\newblock The instability of waves in nonlinear dispersive media.
\newblock {\em Sov. Phys. JETP}, 24(4):740--744, 1967.

\bibitem{zakharov2012kolmogorov}
V.~E. Zakharov, V.~S. L'vov, and G.~Falkovich.
\newblock {\em Kolmogorov spectra of turbulence I: Wave turbulence}.
\newblock Springer Science \& Business Media, 2012.

\end{thebibliography}
\end{document}